\newcommand{\xmath}[1] {\ensuremath{#1}\xspace}
\newcommand{\blmath}[1] {\xmath{\bm{#1}}}
\newcommand{\Xb}{{\blmath X}}
\newcommand{\Yb}{{\blmath Y}}
\newcommand{\Zb}{{\blmath Z}}
\newcommand{\wb}{{\blmath w}}
\newcommand{\xb}{{\blmath x}}
\newcommand{\yb}{{\blmath y}}
\newcommand{\zb}{{\blmath z}}
\newcommand{\Hc}{\mathcal{H}}
\newcommand{\Xc}{\mathcal{X}}
\newcommand{\Yc}{\mathcal{Y}}
\newcommand{\Rd}{{\mathbb R}}
\newcommand{\Pc}{{{\mathcal P}}}
\newcommand{\Kd}{\mathbb{K}}
\newcommand{\beq}{\begin{equation}}
\newcommand{\eeq}{\end{equation}}
\newcommand{\beqa}{\begin{eqnarray}}
\newcommand{\eeqa}{\end{eqnarray}}
\newcommand{\Fc}{{\mathcal F}}
\newtheorem{theorem}{Theorem}
\newtheorem{proposition}[theorem]{Proposition}
\newcommand{\Mc}{\mathcal{M}}
\begin{document}

\title{Unpaired Deep Learning for Accelerated MRI using Optimal Transport Driven CycleGAN}
\date{\vspace{-4ex}}

\author{Gyutaek~Oh, 
        Byeongsu Sim, 
        HyungJin Chung, 
      {Leonard Sunwoo,}
        and~Jong~Chul~Ye,~\IEEEmembership{Fellow,~IEEE}
\thanks{G. Oh, H. Chung, and J. C. Ye are with the Department of Bio and Brain Engineering, 
		Korea Advanced Institute of Science and Technology (KAIST), 
		Daejeon 34141, Republic of Korea (e-mail: \{okt0711, hj.chung, jong.ye\}@kaist.ac.kr). 
		B. Sim is with the Department of Mathematical Sciences, KAIST 
		(e-mail: byeongsu.s@kaist.ac.kr).
		L. Sunwoo is with the Department of Radiology, Seoul National University College of Medicine, Seoul National University Bundang Hospital, Seongnam 13620, Republic of Korea.
		J.C. Ye is also with the Department of Mathematical Sciences, KAIST.} 
}

\maketitle

\begin{abstract}
Recently, deep learning approaches for accelerated MRI have been extensively studied thanks
to their high performance reconstruction in spite of significantly  reduced run-time complexity. 
These neural networks are usually trained in a supervised manner, so 
matched pairs of subsampled and fully sampled $k$-space data are required. 
Unfortunately, it is often difficult to acquire matched fully sampled $k$-space data, since the acquisition of fully sampled
$k$-space data requires long scan time and often leads to the change of the acquisition protocol.
Therefore, unpaired deep learning without matched label data has become a very important research 
topic. 
In this paper, we propose an unpaired deep learning approach using a optimal transport driven cycle-consistent generative 
adversarial network (OT-cycleGAN) that employs a single pair of generator and  discriminator.
The proposed OT-cycleGAN architecture is rigorously derived from a dual formulation of the
optimal transport formulation using a specially designed penalized least squares cost.
The experimental results show that our method can reconstruct high resolution MR images from 
accelerated $k$-space data from both single and multiple coil acquisition, without requiring matched reference data.

\end{abstract}

\begin{IEEEkeywords}
Accelerated MRI, unpaired deep learning, cycleGAN, optimal transport, 
penalized least squares (PLS)
\end{IEEEkeywords}

\IEEEpeerreviewmaketitle

\section{Introduction}\label{sec:introduction}
\IEEEPARstart{M}{agnetic} resonance image (MRI) is a non-invasive medical imaging modality which provides high resolution anatomical and functional images without concerning radiation hazard. 
Unfortunately, due to the nature of the MR acquisition physics, MR scanning time is relatively  long, so many researchers have studied various methods to reduce the acquisition time.
In this regard, accelerated MRI, which relies on the sparse sampling of the $k$-space data, is one of the most important research efforts to reduce the scan time of MRI. 
Unfortunately, aliasing artifacts are unavoidable in accelerated MRI, because of the $k$-space undersampling.
 
For the last decade, compressed sensing (CS) \cite{CaRoTa06,donoho2006compressed} has been a  main research thrust for accelerated MRI \cite{lustig2008compressed, jin2016general}. 
Compressed sensing algorithms can reconstruct high resolution images from undersampled $k$-space data by exploiting the sparsity of a signal in some transform domains, e.g. wavelet transform, etc. 
Although CS methods show high performance, computational complexity is quite significant and hyper-parameter tuning is a difficult task in practice.

Recently, deep learning approaches have been successfully used for various tasks, such as image classification \cite{krizhevsky2012imagenet}, segmentation \cite{ronneberger2015u}, super-resolution problems \cite{dong2015image}, etc. 
Inspired by these successes, many researchers have investigated deep learning approaches for MR reconstruction problems and successfully demonstrated significant performance gain \cite{wang2016accelerating,han2017deep,hammernik2018learning,lee2018deep,schlemper2018deep,zhu2018image, schlemper2018stochastic, eo2018kiki,han2019k, wang2019dimension, liu2019ifr, wang2020deepcomplexmri, sriram2020end, mardani2017deep, yang2017dagan, quan2018compressed, seitzer2018adversarial, wang2019accelerated, narnhofer2019inverse}.
Most of the existing MR deep learning approaches require  fully sampled $k$-space data that can be retrospectively subsampled so that supervised neural network training is feasible.

When the matched fully sampled $k$-space data are not available, generative adversarial network (GAN) \cite{goodfellow2014generative} can be a useful approach thanks to its ability to generate a target data distribution even from a random distribution.
In particular, the mathematical structure has been extensively investigated for several algorithms; for example, Wasserstein GAN (WGAN) \cite{arjovsky2017wasserstein} was derived from the mathematical theory of optimal transport (OT) \cite{villani2008optimal, peyre2019computational}.
In optimal transport theory, one distribution can be transported to another distribution by minimizing the average transportation cost with respect to joint measures, so it provides an unsupervised way of learning the transportation map between two distributions. 
However, artificial features can be generated by these GAN approaches due to mode collapsing. 
Thus, cycle-consistent generative adversarial network (cycleGAN) \cite{zhu2017unpaired} has been investigated for unsupervised learning in the medical imaging field \cite{kang2019cycle, liang2019generating}.

In our recent paper \cite{sim2019optimal, 9136890}, we have revealed that a family of cycleGAN architecture can be derived from a dual formulation of the optimal transport by using a specially designed penalized least squares (PLS) cost  as a transportation cost. 
The theory provides a systematic approach to design a novel optimal transport driven cycleGAN (OT-cycleGAN) architecture for various inverse problems, and preliminary results for MR reconstruction from sparse Fourier samples was also provided  \cite{sim2019optimal}.
However, in applying this algorithm to real MR problems including parallel imaging, significant technical issues have been identified.
In particular, for parallel imaging, the direct application of the theory in \cite{sim2019optimal} leads to a complicated discriminator architecture that compares all channel-wise reconstruction, which makes the algorithm difficult to train.
To address this, here we introduce a novel compositional discriminator and cycle-consistency term using the square-root of sum-of-the squares (SSoS) operation, and provide a mathematical  theory to guarantee that it is still valid dual formulation from optimal transport problem. 
Accordingly, the main architectural novelty in  \cite{sim2019optimal}, which replaces one generator with a deterministic Fourier transform, is still retained.

Due to the simplified network architecture, the proposed OT-cycleGAN architecture not only improves the stability of network training but also reduces training time. 
Furthermore, experimental results show that the proposed method can outperform the conventional cycleGAN and achieve competitive performance to supervised learning approaches. 

This paper is structured as follows.
Section~\ref{sec:related works} reviews the existing deep learning approaches for MR, cycleGAN, and the optimal transport for inverse problems.
In Section \ref{sec:theory}, the theory for OT-cycleGAN design is explained, and then a novel cycleGAN architecture for accelerated MRI for both single and parallel imaging is derived from the theory. 
Section \ref{sec:method} explains the resulting network architecture, experimental datasets, and training details. 
The experimental results are given in Section \ref{sec:result}. 
Then Section \ref{sec:discussion} provides discussion, which is followed by conclusions in Section \ref{sec:conclusion}.

\section{Related works}\label{sec:related works}
\subsection{Deep Learning Methods for Accelerated MRI}
Many deep learning algorithms for accelerated MRI have been implemented in the image domain. 
Wang $et$ $al.$\cite{wang2016accelerating} used convolutional neural network (CNN) to reconstruct fully sampled MR images from downsampled MR images. 
In \cite{han2017deep}, transfer learning approach  was adopted to remove the streaking artifacts in radial acquisition.
Unfolded variational approach for compressed sensing MRI was implemented as image domain learning\cite{hammernik2018learning}, and deep cascaded CNN with the data consistency layer was successfully demonstrated for dynamic MRI\cite{schlemper2018deep, schlemper2018stochastic}.
Recently, Sriram $et$ $al.$\cite{sriram2020end} proposed end-to-end variational networks which estimate coil sensitivity maps and exploit them to reconstruct MR images.
As MR data are complex valued, the usual approach is to split real and imaginary values into separate channels.
Instead, Lee $et$ $al.$\cite{lee2018deep} proposed deep residual learning networks that consist of magnitude and phase networks, and Wang $et$ $al.$\cite{wang2020deepcomplexmri} suggested a neural network architecture with direct complex convolution.

Although image domain learnings are popular, they often suffer from the blurring artifacts if the training data are not sufficient.
To overcome this limitation, several algorithms are implemented in the $k$-space domain.
Akcakaya  $et$ $al.$\cite{akccakaya2019scan} directly estimate the GRAPPA interpolation kernel using a neural network without training data.
Han $et$ $al.$\cite{han2019k} demonstrated that CNN can learn $k$-space data interpolation kernel from training data, and verified that their method outperforms other image domain reconstruction methods. 
More sophisticated deep learning algorithms train the neural network in both of image and $k$-space domains for further improvement.
Kiki-net\cite{eo2018kiki} is composed of CNNs for $k$-space domain learning and image domain learning, respectively, along with data consistency layers. 
Wang $et$ $al.$\cite{wang2019dimension} also proposed DIMENSION network, which consists of a frequency prior network and a spatial prior network, for dynamic MR imaging.

Inspired by the success of generative adversarial network (GAN), Mardani $et$ $al.$\cite{mardani2017deep} implemented generative adversarial networks for CS-MRI. 
They adopted the adversarial loss to the cost function, so that the generator can recover detailed textures.
In \cite{yang2017dagan,seitzer2018adversarial}, the adversarial loss was combined with the perceptual loss for visual improvement of reconstructed images. 
Wang $et$ $al.$\cite{wang2019accelerated} proposed dual-domain GAN for accelerated MRI.
The generator and the discriminator in \cite{wang2019accelerated} were trained on both of image and $k$-space domains, so that fine details of MR images could be recovered.
Quan $et$ $al.$\cite{quan2018compressed} also introduced GAN with a cyclic loss for dynamic MRI.

One of the fundamental limitations of the aforementioned works is that they require matched pairs which contain fully sampled and downsampled $k$-spaces or images.
Although GAN could be used for unpaired training, most of the aforementioned algorithms adopted GAN for the refinement of visual contents, not for the reconstruction with unpaired data.
Narnhofer $et$ $al.$\cite{narnhofer2019inverse} tried to address this issue by proposing inverse GANs for accelerated MRI.
Specifically, they trained GANs to map a latent vector to complex-valued MR image first. 
Then, by presenting an optimization process called GAN prior adaptation, they optimized networks to reconstruct accelerated MR images with high frequency details. 
Although the work by Quan $et$ $al.$\cite{quan2018compressed} could be in principle used for unsupervised training, their main focus was using GAN for the visual refinement in supervised training.

\subsection{CycleGAN and Optimal Transport}
Wasserstein GAN (WGAN)\cite{arjovsky2017wasserstein} is an important landmark in GAN research, thanks to the discovery of an explicit mathematical link to the optimal transport (OT). 
Optimal transport is a mathematical tool to compare two measures in a Lagrangian framework \cite{villani2008optimal,peyre2019computational}.
Since OT provides a means to transport one probability measure to another by minimizing the average transportation cost {with respect to joint measures}, it provides a mathematical means for unsupervised learning between two probability distributions.

Inspired by the link between GAN and OT\cite{arjovsky2017wasserstein}, many researchers have provided analyses of GAN from the optimal transport point of view. 
For example, Genevay $et$ $al.$\cite{genevay2018learning} proposed an OT loss called Sinkhorn loss by introducing the entropic smoothing.
GAN with Sinkhorn loss showed stable training and reduced computational complexity.
Also, Salimans $et$ $al.$\cite{salimans2018improving} presented optimal transport GAN with a novel metric. 
They defined the mini-batch energy distance which is more discriminative than other existing distances.

To deal with the mode-collapsing in GAN-based unpaired image-to-image translation, Zhu $et$ $al.$ proposed cycle consistent generative adversarial network (cycleGAN)\cite{zhu2017unpaired}.
By extending cycleGAN, Almahairi $et$ $al.$\cite{almahairi2018augmented} proposed augmented cycleGAN for learning many-to-many mapping.
In \cite{yuan2018unsupervised}, cycleGAN was used for image super resolution.

\subsection{Main Contribution}
Although several studies have provided analyses of GAN from the optimal transport point of view, these are for the original GAN and its variants, and to the best of our knowledge, the relationship between cycleGAN and optimal transport has not been studied extensively.
In our companion paper\cite{sim2019optimal}, we derived a family of cycleGAN architecture from optimal transport perspective. 
By extending this idea for real applications, in this paper we provide a mathematical derivation of a novel OT-cycleGAN architecture for accelerated MRI. 
This paper is the first real world application of the theory for accelerated MRI, which can be used not only for single coil imaging but also for parallel imaging.

\begin{figure*}[!t] 	
	\centerline{\includegraphics[width=0.99\linewidth]{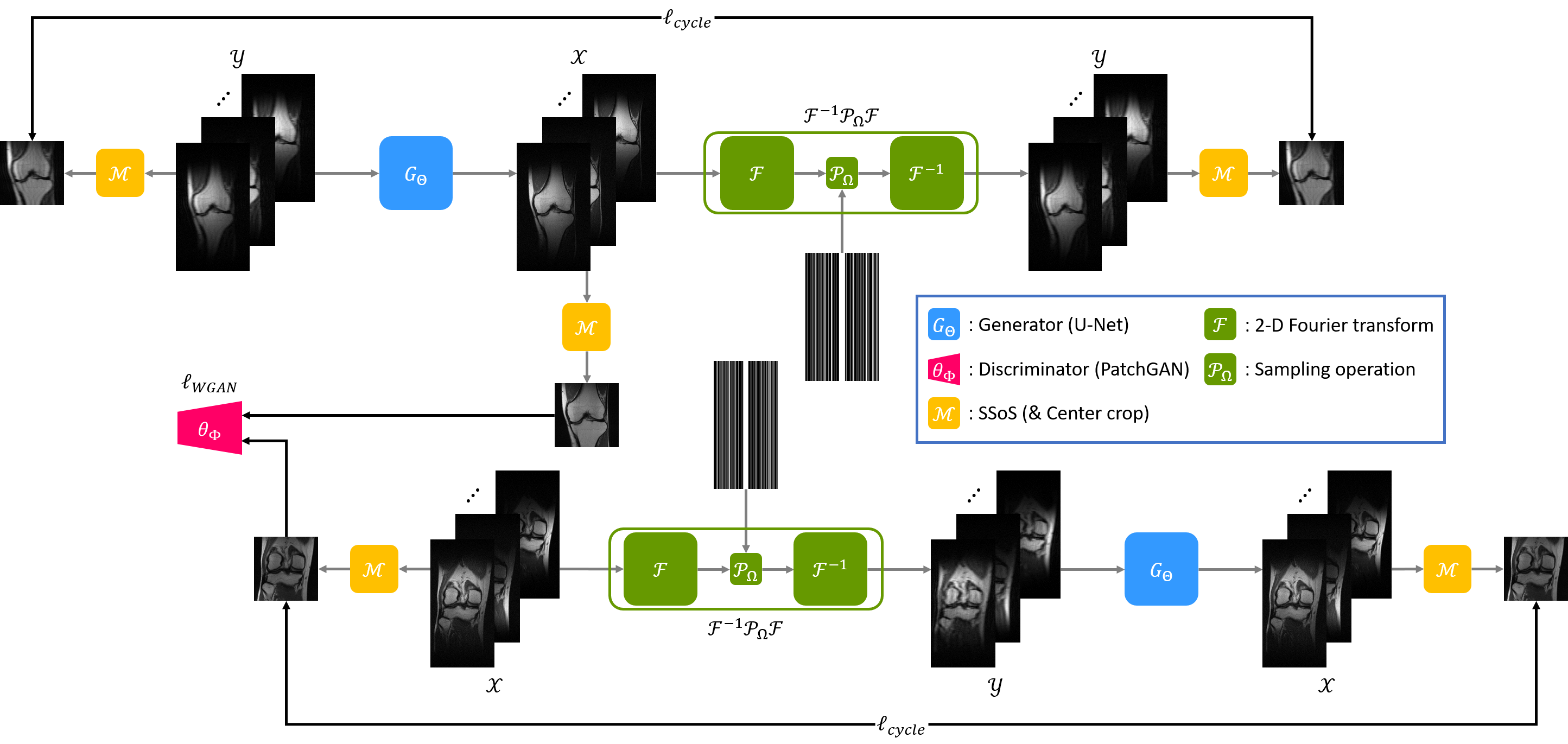}}
	\caption{Proposed cycleGAN architecture for accelerated MRI. The generator to produce 
		downsampled images from fully sampled images is replaced by a deterministic operation using 2-D Fourier transform and 
		$k$-space sampling operation. 
		The MR images in the top and bottom branches are different because they are unpaired data. The sampling mask in the bottom branch is randomly generated at every step.
		Note that the cycle-consistency loss and the WGAN loss are calculated using the sum-of-the squares images.}
	\vspace{-0.5cm}
	\label{fig:cycleGAN}
\end{figure*}

\section{Theory}\label{sec:theory}
\subsection{Optimal Transport Driven CycleGAN: A Review}
Here, we briefly review the OT-driven cycleGAN design in our companion paper\cite{sim2019optimal} to make the paper self-contained.

In inverse problems, a noisy measurement $\yb \in \Yc$ from an unobserved image $\xb \in \Xc$ is modeled by
\begin{eqnarray}
\yb&=&\Hc \xb +\wb \ ,
\end{eqnarray}
where $\wb$ is the measurement noise and $\Hc : \Xc \mapsto \Yc$ is the measurement operator.
In our companion paper\cite{sim2019optimal}, we propose a penalized least squares (PLS) cost function with a novel deep learning prior as follows:
\begin{eqnarray}\label{eq:cost0}
c(\xb,\yb;\Theta,\Hc)=\|\yb-\Hc\xb\|+ \| G_\Theta(\yb) -\xb\| \ ,
\end{eqnarray}
where $G_\Theta$ is a neural network with the network parameter $\Theta$ and the input $\yb$.
For unsupervised learning using optimal transport, we assume that \eqref{eq:cost0} is a transport cost between the two domains $\Xc$ and $\Yc$ with probability distributions with measures $\mu$ and $\nu$, respectively.
Then, we showed that Kantorovich dual optimal transport formulation\cite{villani2008optimal}, which minimizes the average transport cost for between two measures $\Xc$ and $\Yc$, is given by \cite{sim2019optimal}:
\begin{align}
\min_{\Theta, \Hc} \Kd(\Theta, \Hc) = \min_{\Theta, \Hc} \max_{\Phi, \Xi} 
\ell(\Theta, \Hc; \Phi, \Xi) \ ,
\end{align}
where
\begin{eqnarray}
\ell(\Theta, \Hc;\Phi, \Xi) = \gamma \ell_{cycle}(\Theta, \Hc) + \ell_{WGAN}(\Theta, \Hc;\Phi, \Xi) \ ,
\end{eqnarray}
where $\gamma$ is an appropriate hyperparameter, $\ell_{cycle}$ denotes the cycle-consistency loss, and $\ell_{WGAN}$ is the Wasserstein GAN\cite{arjovsky2017wasserstein} loss. 
More specifically, $\ell_{cycle}$ is given by \cite{sim2019optimal}:
\begin{eqnarray}
\begin{split}
\ell_{cycle}(\Theta, \Hc) =  \int_\Xc \|\xb - G_\Theta(\Hc \xb)\|d\mu(\xb) \\
+  \int_\Yc \|\yb - \Hc G_\Theta(\yb)\| d\nu(\yb) \ ,
\end{split}
\label{eq:cycle loss}
\end{eqnarray}
and  $\ell_{WGAN}$ is given by \cite{sim2019optimal}:
\begin{eqnarray}
\begin{split}
&\ell_{WGAN}(\Theta, \Hc;\Phi, \Xi) \\
&= \left(\int_\Xc \varphi_\Phi(\xb)d\mu(\xb) - \int_\Yc \varphi_\Phi(G_\Theta(\yb))d\nu(\yb)\right) \\
&+ (\left(\int_\Yc \psi_\Xi(\yb)d\nu(\yb) - \int_\Xc \psi_\Xi(\Hc \xb)d\mu(\xb)\right) \ .
\end{split}
\label{eq:WGAN loss}
\end{eqnarray} 
Here, Kantorovich 1-Lipschitz potential $\varphi_\Phi$ and $\psi_\Xi$ correspond to the Wasserstein GAN discriminators, where a function $f:\xb\in \Xc\mapsto f(\xb) \in \Rd$ is called 1-Lipschitz if it satisfies
\begin{align}
|f(\xb)-f(\yb)|\leq \|\xb-\yb\|\quad\forall \xb,\yb\in \Xc.
\end{align}
If $\Hc$ is known, we further showed that the maximization of the discriminator $\psi_\Xi$ in \eqref{eq:WGAN loss} with respect to $\Xi$ does not affect the generator $G_\Theta$.
Therefore, the discriminator $\psi_\Xi$ can be neglected\cite{9136890}.
This consideration leads to a novel OT-cycleGAN architecture with only one generator and one discriminator. 
Specifically, our OT-cycleGAN cost function can be reformulated as
\begin{eqnarray}
\min_\Theta \Kd(\Theta) = \min_\Theta \max_\Phi \ell(\Theta;\Phi) \ ,
\end{eqnarray}
where $\ell(\Theta;\Phi) = \gamma\ell_{cycle}(\Theta) + \ell_{WGAN}(\Theta;\Phi)$. 
Here, $\ell_{cycle}$ and $\ell_{WGAN}$ are simplified as
\begin{eqnarray}
\begin{split}
\ell_{cycle}(\Theta) =  \int_\Xc \|\xb - G_\Theta (\Hc \xb)\|d\mu(\xb) \\
+  \int_\Yc \|\yb-\Hc G_\Theta (\yb)\| d\nu(\yb) \ ,
\end{split}
\label{eq:our cycle loss}
\end{eqnarray}
\begin{eqnarray}
\begin{split}
&\ell_{WGAN}(\Theta; \Phi) \\
&= \left(\int_\Xc \varphi_\Phi(\xb)d\mu(\xb) - \int_\Yc \varphi_\Phi(G_\Theta (\yb))d\nu(\yb)\right) \ .
\end{split}
\label{eq:our WGAN loss}
\end{eqnarray}

In contrast to the recent inverse problem approaches using deep learning prior\cite{zhang2017learningdenoiser,aggarwal2018modl}, which incorporate physics-driven information during the run-time reconstruction, our cycleGAN formulation shows that the physics-driven constraint can be incorporated in the training phase of a feed-forward deep neural networks so that the trained network can generate physically meaningful estimates in a real time manner.

\subsection{CycleGAN for Accelerated MRI}
We are now ready to use the theory to derive unique OT-cycleGAN architectures for accelerated MR acquisition.
As the single coil case is considered a special case of the parallel imaging, we derive our formula for the parallel imaging problem for simplicity.

In accelerated MRI for the multi-channel acquisition, the forward measurement model can be described as
\begin{eqnarray}\label{eq:forward}
\hat \Xb = \Pc_\Omega \Fc \Xb \ ,
\end{eqnarray}
where
\begin{align*}
\Xb&=[\xb^{(1)},\cdots, \xb^{(C)}] \ ,\\
\widehat \Xb&=[\widehat \xb^{(1)},\cdots, \widehat\xb^{(C)}] \ .
\end{align*}
Here, the superscript $^{(i)}$ denotes the $i$-th coil and $C$ is the number of coils, and $\xb^{(i)}$ is the fully sampled image, $\hat \xb^{(i)}$ is the downsampled $k$-space, $\Fc$ is the 2-D Fourier transform, and $\Pc_\Omega$ is the projection to $\Omega$ that denotes $k$-space sampling indices. 

In order to implement every step of the neural network in the image domain without alternating between the $k$-space and the image domain, this forward measurement is converted to the image domain by applying an inverse 2-D Fourier transform, $\Fc^{-1}$:
\begin{eqnarray}\label{eq:model}
\Yb = \Fc^{-1}\Pc_\Omega \Fc \Xb \ ,
\end{eqnarray}
where
\begin{align*}
\Yb&=[\yb^{(1)},\cdots, \yb^{(C)}] \ .
\end{align*}
A blind extension of \eqref{eq:cost0} to the multi-channel MRI reconstruction problem would lead to the following PLS cost:
\begin{eqnarray}\label{eq:costp}
c(\Xb,\Yb;\Theta)=\sum_{i=1}^C\|\yb^{(i)}-\Hc\xb^{(i)}\|+ \| G_{\Theta}(\yb^{(i)}) -\xb^{(i)}\| \ ,
\end{eqnarray}
where 
\begin{align*}
\Hc:= \Fc^{-1} \Pc_\Omega \Fc \ .
\end{align*}
The main problem of this approach is that the dual formulation leads to a complicated discriminator that should compare all coil images channel by channel with the corresponding full resolution channel images.
Since each coil image is highlighted with a specific sensitivity pattern, we found that the resulting cycleGAN is mainly trained to recover sensitivity patterns rather than underlying images.
 
To address this issue, we propose the following cost function for PLS formulation as a transport cost:
\begin{align}
c(\Xb,\Yb;\Theta) = & \| \Mc \Yb - \Mc \Hc\Xb\| \notag\\
&+ \|\Mc G_\Theta(\Yb) - \Mc \Xb\| \ ,
\label{eq:costMR}
\end{align}
where $\Mc: \Xc \times \cdots\times \Xc \mapsto \Xc$ is the element-wise square-root of sum-of-the squares (SSoS) operation such that $\zb=\Mc(\Xb)$ is composed of elements
\begin{align}\label{eq:sos}
z_n =  \left(\sum_{i=1}^C |x^{(i)}_n|^2 \right)^{\frac{1}{2}} \ ,
\end{align}
where $x_n^{(i)}$ is the $n$-th element of the $i$-th coil image.
One of the main advantages of our cost \eqref{eq:costMR} over \eqref{eq:costp} is that the image error is calculated using SSoS images after removing the coil sensitivity dependency on each
channel image.
The resulting discriminator also compares the true and fake images using SSoS images so that the generated image quality is determined by the underlying images rather than the coil sensitivity pattern.

Since the forward operator $\Hc$ is known, our cycleGAN architecture has only one generator and one discriminator as discussed before.
Specifically, our cycleGAN cost function can be reformulated as
\begin{eqnarray}
\min_\Theta \Kd(\Theta) = \min_\Theta \max_\Phi \ell(\Theta;\Phi) \ ,
\end{eqnarray}
where $\ell(\Theta;\Phi) = \gamma\ell_{cycle}(\Theta) + \ell_{WGAN}(\Theta;\Phi)$.
Here, $\ell_{cycle}$ loss is calculated using SSoS images as:
\begin{align}
\ell_{cycle}(\Theta) &=  \int_{\Xc\times \cdots \times \Xc}\|\Mc\Xb - \Mc G_\Theta (\Fc^{-1}\Pc_\Omega \Fc \Xb)\|d\mu(\Xb) \notag \\
&+  \int_{\Yc\times \cdots \times \Yc} \|\Mc\Yb-\Mc\Fc^{-1}\Pc_\Omega \Fc G_\Theta (\Yb)\| d\nu(\Yb) \ ,
\label{eq:our cycle loss parallel}
\end{align}
and the GAN loss $\ell_{WGAN}$ is also simplified as
\begin{align}
&\ell_{WGAN}(\Theta; \Phi) =  \notag\\
&\int_{\Xc\times \cdots \times \Xc} \varphi_\Phi(\Xb)d\mu(\Xb) - \int_{\Yc\times \cdots \times \Yc}  \varphi_\Phi(G_\Theta (\Yb))d\nu(\Yb) \ ,
\label{eq:our WGAN loss parallel}
\end{align}
where the discriminator $\varphi_\Phi(\Xb)$ should be 1-Lipschitz with respect to multi-coil images to satisfy the mathematical validity of the derivation in our companion paper\cite{9136890}.
This implies that we should show that
\begin{align}\label{eq:1lips}
|\varphi_\Phi(\Xb)-\varphi_\Phi(\Xb') | & \leq \|\Xb-\Xb'\|_F,\\
&\quad \mbox{for all} \quad \Xb,\Xb'\in \Xc\times \cdots\times \Xc \notag \ ,
\end{align}
where $\|\cdot\|_F$ denotes the Frobenius norm.
However, as discussed before, care should be taken for the designing the discriminator to satisfy the 1-Lipschitz property \eqref{eq:1lips}.
For example, if the discriminator compares each coil images separately, the network architecture is quite complicated.
In addition, each coil image is dominated by the coil sensitivity patterns, making the network difficult to train.

One of the important contributions of this paper is, therefore, to show that there exists a simpler discriminator architecture that is robust to distinct coil sensitivity patterns by exploiting the SSoS image.
Specifically, Proposition~\ref{prp:main2} shows that 1-Lipschitz property of the original Kantorovich potential $\varphi_\Phi$ can be retained by a discriminator using SSoS image if Kantorovich potential is decomposed as
\begin{align}\label{eq:class_parallel}
\varphi_\Phi(\Xb) = \left(\theta_\Phi\circ \Mc\right)(\Xb) \ ,
\end{align}
where $\circ$ denotes the composite function, $\Mc$ is the SSoS operation, and $\theta_\Phi$ is a 1-Lipschitz function with respect to the SSoS image.
Therefore, the resulting cycleGAN architecture is implemented as the SSoS operation as shown in Fig.~\ref{fig:cycleGAN}, so that it just requires to tell the difference between the fake and true images in the SSoS image domain.
\begin{proposition}\label{prp:main2}
If $\theta_\Phi$ in \eqref{eq:class_parallel} is 1-Lipschitz in the SSoS domain, then $\varphi_\Phi$ in \eqref{eq:class_parallel} is also 1-Lipschitz, i.e. it satisfies \eqref{eq:1lips}.
\end{proposition}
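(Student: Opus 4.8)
The plan is to reduce the claim to a single non-expansiveness property of the SSoS operation and then chain it with the assumed 1-Lipschitz continuity of $\theta_\Phi$. Since $\varphi_\Phi = \theta_\Phi\circ\Mc$, the hypothesis gives, for any $\Xb,\Xb'\in\Xc\times\cdots\times\Xc$,
\[
|\varphi_\Phi(\Xb)-\varphi_\Phi(\Xb')| = |\theta_\Phi(\Mc\Xb)-\theta_\Phi(\Mc\Xb')| \leq \|\Mc\Xb-\Mc\Xb'\| \ ,
\]
so it suffices to establish the one key estimate $\|\Mc\Xb-\Mc\Xb'\|\leq\|\Xb-\Xb'\|_F$, i.e.\ that $\Mc$ is itself 1-Lipschitz from the multi-coil Frobenius geometry into the SSoS domain.

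First I would decompose both sides pixel by pixel. Writing $z_n$ and $z_n'$ for the $n$-th SSoS entries as in \eqref{eq:sos}, and collecting the coil values at pixel $n$ into the vectors $\mathbf{a}_n=(x_n^{(1)},\ldots,x_n^{(C)})$ and $\mathbf{a}_n'=(x_n'^{(1)},\ldots,x_n'^{(C)})$, we have $z_n=\|\mathbf{a}_n\|_2$ and $z_n'=\|\mathbf{a}_n'\|_2$. The central observation is then the reverse triangle inequality for the Euclidean norm applied at each pixel,
\[
|z_n-z_n'| = \bigl|\,\|\mathbf{a}_n\|_2-\|\mathbf{a}_n'\|_2\,\bigr| \leq \|\mathbf{a}_n-\mathbf{a}_n'\|_2 \ ,
\]
which bounds the change in each SSoS value by the coil-wise $\ell_2$ deviation at that pixel. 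Squaring and summing over all pixels $n$ then yields
\[
\|\Mc\Xb-\Mc\Xb'\|^2 = \sum_n |z_n-z_n'|^2 \leq \sum_n\sum_{i=1}^C |x_n^{(i)}-x_n'^{(i)}|^2 = \|\Xb-\Xb'\|_F^2 \ ,
\]
and taking square roots gives the desired estimate. Combining it with the composition inequality above proves \eqref{eq:1lips}.

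The argument is essentially routine once the right reduction is made, so the only real obstacle is conceptual rather than technical: recognizing that the pointwise SSoS map is exactly a Euclidean-norm evaluation per pixel, so that non-expansiveness reduces to the reverse triangle inequality rather than requiring any differentiability or subgradient estimate on the nonsmooth map $\Mc$ (note that $\Mc$ fails to be differentiable wherever a coil vector vanishes). I would also double-check that the norm on the SSoS codomain used in the hypothesis on $\theta_\Phi$ is the same flattened $\ell_2$ norm that appears after summing over pixels, so that the two Lipschitz constants chain with constant exactly $1$; this matching of norms is the one place where a silent mismatch could inflate the constant.
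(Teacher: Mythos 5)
Your proposal is correct and follows essentially the same route as the paper's own proof: both reduce the claim to the non-expansiveness of $\Mc$, establish it via the pixel-wise (row-wise) reverse triangle inequality for the Euclidean norm, sum the squares over pixels to get $\|\Mc(\Xb)-\Mc(\Xb')\|\leq\|\Xb-\Xb'\|_F$, and then chain this with the assumed 1-Lipschitz property of $\theta_\Phi$. Your added remark about matching the codomain norm so the constants chain exactly is a sensible sanity check, but otherwise there is no substantive difference from the paper's argument.
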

\begin{proof}
Let $N$ and $C$ denote the number of image pixel and coils, respectively.
For $\Xb\in \Rd^{N\times C}$,  the SSoS image is given by
\begin{align*}
\Mc(\Xb)&=\begin{bmatrix} \left(\sum_{i=1}^C |x^{(i)}_1|^2 \right)^{\frac{1}{2}} & \cdots & \left(\sum_{i=1}^C |x^{(i)}_N|^2 \right)^{\frac{1}{2}}  \end{bmatrix}^\top\\
&=\begin{bmatrix} \|\Xb^1\| & \cdots & \|\Xb^N\| \end{bmatrix}^\top \ ,
\end{align*}
where $\Xb^i$ denote the $i$-th row of $\Xb$.
Accordingly, we have
\begin{align}
\|\Mc(\Xb)-\Mc(\Zb)\| 
&=\left(\sum_{i=1}^N \left( \|\Xb^i\| - \|\Zb^{i}\| \right)^2\right)^{\frac{1}{2}} \ ,
\end{align}
Thanks to the triangular inequality for the row vectors:
$$\left( \|\Xb^i\| - \|\Zb^{i}\| \right)^2 \leq \|\Xb^i - \Zb^{i}\|^2 ~  \ ,$$
we have
\begin{align}
\|\Mc(\Xb)-\Mc(\Zb)\| &\leq  \left(\sum_{i=1}^N\|\Xb^i - \Zb^{i}\|^2 \right)^{\frac{1}{2}}  \notag\\
&= \|\Xb-\Zb\|_F \ . \label{eq:mid}
\end{align}
Using \eqref{eq:mid} and the assumption that $\theta_\Phi$ is 1-Lipschitz, we have
\begin{align*}
|\varphi_\Phi(\Xb)-\varphi_\Phi(\Zb)| &= |  \left(\theta_\Phi\circ \Mc\right)(\Xb) -  \left(\theta_\Phi\circ \Mc\right)(\Zb)| \\
&\leq \|\Mc(\Xb)-\Mc(\Zb)\| \notag\\
& \leq  \|\Xb-\Zb\|_F \ .
\end{align*}
where the first inequality comes from 1-Lipschitz property of $\theta_\Phi$ for the SSoS images, and the last inequality comes from \eqref{eq:mid}.
Therefore, $\varphi_\Phi$ satisfies \eqref{eq:1lips} and is 1-Lipschitz.
\end{proof}

To ensure that the Kantorovich potential becomes 1-Lipschitz in the SSoS domain, we use Wasserstein GAN loss in (\ref{eq:our WGAN loss}) with the gradient penalty loss\cite{gulrajani2017improved}.
Then, our final WGAN loss functions is given by:
\begin{align}\label{eq:eta}
&\ell_{WGAN}(\Theta; \Phi) =  \notag\\
&\int_{\Xc\times \cdots \times \Xc} \varphi_\Phi(\Xb)d\mu(\Xb) - \int_{\Yc\times \cdots \times \Yc}  \varphi_\Phi(G_\Theta (\Yb))d\nu(\Yb) \notag \\
&+\eta \int_{\tilde{\Xc}}(\|\nabla_{{\zb}}\theta_\Phi(\zb)\|_2 - 1)^2d\tau
(\zb) \ ,
\end{align}
where $\varphi_\Phi(\Xb) = \left(\theta_\Phi\circ \Mc\right)(\Xb)$ and $\eta>0$ is a regularization parameter, and $\tau(\zb)$ is a measure for the SSoS images $\Mc(\xb),\xb\in\Xc$.
Here, $\zb=\alpha \Mc(\Xb)+(1 - \alpha)\Mc\left(G_\Theta(\Yb)\right)$, where $\alpha$ is a random variable from the uniform distribution between $[0,1]$\cite{gulrajani2017improved}.

\section{Method}\label{sec:method}
\subsection{Dataset}
To verify the generalizability of our algorithm, we use two datasets, the human connectome project (HCP) data and the fastMRI \cite{zbontar2018fastmri} data, which have different characteristics.
Note that the $k$-space data from the first dataset are obtained by simulation, whereas the second dataset is acquired from real scanner using a retrospective subsampling with the sampling masks that are used in practice.
The details are as follows.

\subsubsection{Human Connectome Project (HCP) Dataset} 
Human connectome project (HCP) data are composed of human brain MR images. 
This dataset was acquired by Siemens 3T system, with 3D spin echo protocol. 
The parameters for HCP data acquisition are as follows: echo train duration = 1105, matrix size = $320 \times 320$, voxel size = 0.7mm $\times$ 0.7 mm $\times$ 0.7 mm, TR = 3200 ms, and TE = 565 ms.
This dataset only contains magnitude images. 
To acquire multi-coil MR images, we estimate coil sensitivity maps by simulation (http://bigwww.epfl.ch/algorithms/mri-reconstruction/).
The number of coils ($C$) was 8 for HCP data.

To train our network, we use 3200 brain images, while the other 600 slices were used to test our model. 
We adopt 2D Cartesian random sampling pattern for HCP dataset, so $k$-space samples are acquired randomly from the uniform distribution. 
Also, we set the acceleration factor $R$ as 4, and $36 \times 36$ size of the central $k$-space samples are included as the autocalibration signal (ACS).
An example of the sampling mask for HCP data is shown in Fig. \ref{fig:sampling mask}(a).

\subsubsection{FastMRI Dataset}
FastMRI data \cite{zbontar2018fastmri} are provided by the NYU Langone Health for MR reconstruction challenge. 
Multi-coil fully sampled MRI data were acquired from either three different clinical 3T systems or one clinical 1.5T system. 
The number of coils ($C$) was 15, and 2D Cartesian turbo spin echo (TSE) protocol was used. 
The dataset contains coronal proton density weighting with fat suppression (PDFS) and without fat suppression (PD) data.
The parameters for data acquisition are as follows: echo train length (ETL) = 4, matrix size = $320 \times 320$, in-plane resolution = 0.5 mm $\times$ 0.5 mm, slice thickness = 3 mm, TR = 
2200 - 3000 ms, TE = 27 - 34 ms. 
Also, to simulate single coil data from multi-coil data, \cite{zbontar2018fastmri} used emulated single coil (ESC) methodology. 

The fastMRI database in \cite{zbontar2018fastmri} also provide training, validation, and masked test sets, and these datasets include $k$-space data and MR images of human knees. 
The height of images is 640, but the width differs. 
For our experiments, we extract 3500 MR images from both single and multi-coil validation data. 
Then, 3000 slices out of the total 3500 are used for training, and the rest of 500 slices are used for testing. 
Training data and test data are extracted from different subjects.
To acquire downsampled MR data, we use 1D Cartesian uniform random sampling pattern in fastMRI data as shown in Fig. \ref{fig:sampling mask}(b), (c). 
We set the acceleration facter $R$ as 4 or 6. 
The ACS region contains 8\% of central $k$-space lines when $R$ is 4, and 6\% of central $k$-space lines when $R$ is 6. 

\begin{figure}[!ht]
	\centering
	\begin{minipage}[b]{0.3\linewidth}
		\centerline{\includegraphics[width=\linewidth,height=2.8cm]{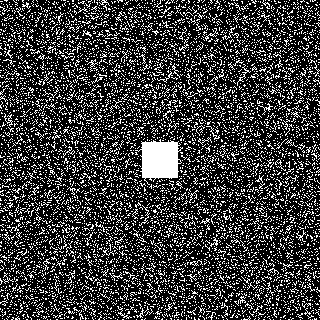}}
		\centerline{(a)}\medskip
	\end{minipage}
	\hspace{0.2cm}
	\begin{minipage}[b]{0.3\linewidth}
		\centerline{\includegraphics[width=\linewidth,height=2.8cm]{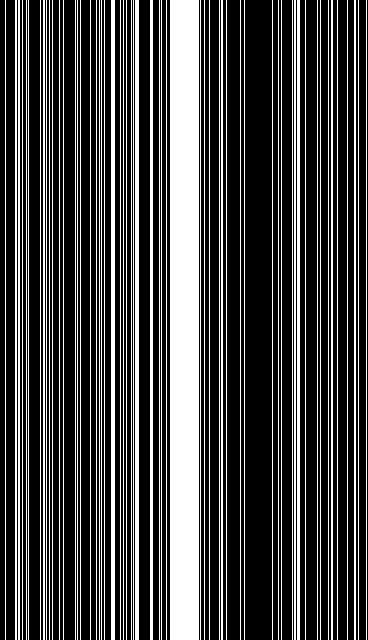}}
		\centerline{(b)}\medskip
	\end{minipage}
	\hspace{0.2cm}
	\begin{minipage}[b]{0.3\linewidth}
		\centerline{\includegraphics[width=\linewidth,height=2.8cm]{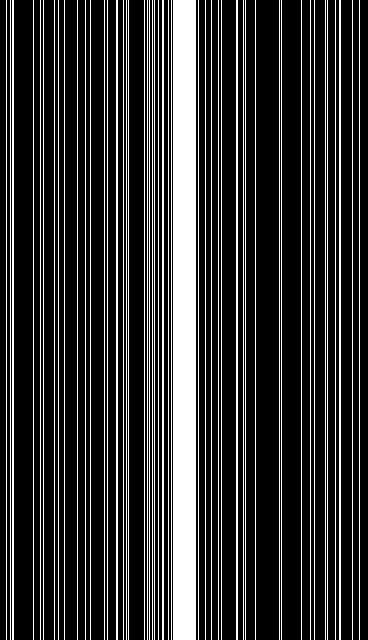}}
		\centerline{(c)}\medskip
	\end{minipage}
	\caption{Examples of sampling masks: (a) 2D Cartesian random sampling with $R = 4$, (b) 1D Cartesian random sampling with $R = 4$,
	and  (c) 1D Cartesian random sampling with $R = 6$.}
	\vspace{-0.5cm}
	\label{fig:sampling mask}
\end{figure}

\subsection{Network Architecture}
The schematic diagram of the proposed cycleGAN architecture is illustrated in Fig. \ref{fig:cycleGAN}.
Although the conventional cycleGAN has two generators and two discriminators for translation between two domains, our OT-cycleGAN architecture has only one generator and one discriminator due to the deterministic forward path.
Accordingly, the network training is more stable, and the training time is reduced by a factor of two compared to the conventional cycleGAN.

Note that there exist no paired images for the update of the generator and the discriminator, so the images used for the top and bottom in Fig.~\ref{fig:cycleGAN} are different.  
More specifically, the high-resolution domain is for fully sampled images, so all downsampled images for the top branch of our model were generated from the fully sampled images with random sampling masks.  
Furthermore, sampling masks in the bottom branch are randomly generated for every step with fixed acceleration factor and ratio of ACS region to improve the generalization capability of the algorithm.
Additionally, different undersampling masks were used to different MR volumes in fastMRI challenges, so it is reasonable to generate random sampling masks for every step so that one network can be used for various dataset. 
For this reason, the two sampling masks in Fig.~\ref{fig:cycleGAN} are different.

We use U-Net \cite{ronneberger2015u} generator to reconstruct fully sampled images from downsampled images as shown in Fig. \ref{fig:G and D}(a). 
Our generator has an encoder part and a decoder part. 
The encoder part consists of $3 \times 3$ convolution, instance normalization\cite{ulyanov2016instance}, and leaky ReLU. 
In the decoder part, the nearest neighbor upsampling and $3 \times 3$ convolution are used. 
Also, there are max pooling layers and skip-connection through the channel concatenation. 
To handle complex values, we concatenate real values and imaginary values along the channel dimension. 
Consequently, an input and an output of the generator have 2$C$ channels, where $C$ denotes the number of coils. 
Once the complex images are generated, the cycle-consistency loss are calculated using the square-root of sum-of-the squares (SSoS) images.

Our discriminator is same as the discriminator of the original cycleGAN. 
We use PatchGAN\cite{isola2017image} discriminator as shown in Fig. \ref{fig:G and D}(b). 
PatchGAN discriminator classifies inputs at patch scales, and it thus helps the generator to reconstruct the high frequency structure of images. 
The discriminator has convolution layers, instance normalization, and leaky ReLU operations. 
As discussed before, inputs of the discriminator consist of SSoS images.
For single coil images, the SSoS operation leads to the magnitude images.

\begin{figure}[!hbt]
	\centering
	\begin{minipage}[b]{\linewidth}
		\centerline{\includegraphics[width=\linewidth]{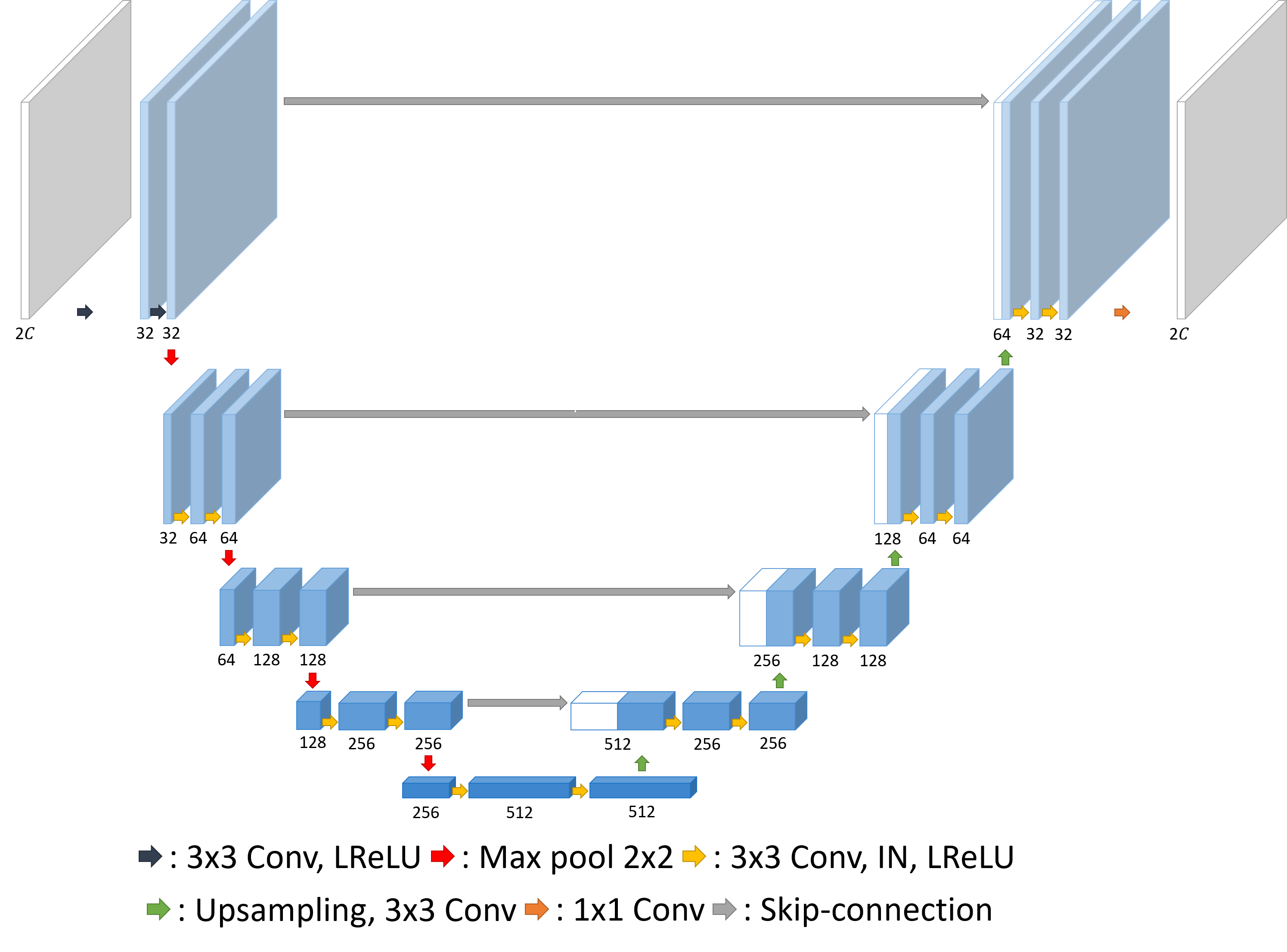}}
		\centerline{(a)}\medskip
	\end{minipage}
	\begin{minipage}[b]{0.9\linewidth}
		\centerline{\includegraphics[width=\linewidth]{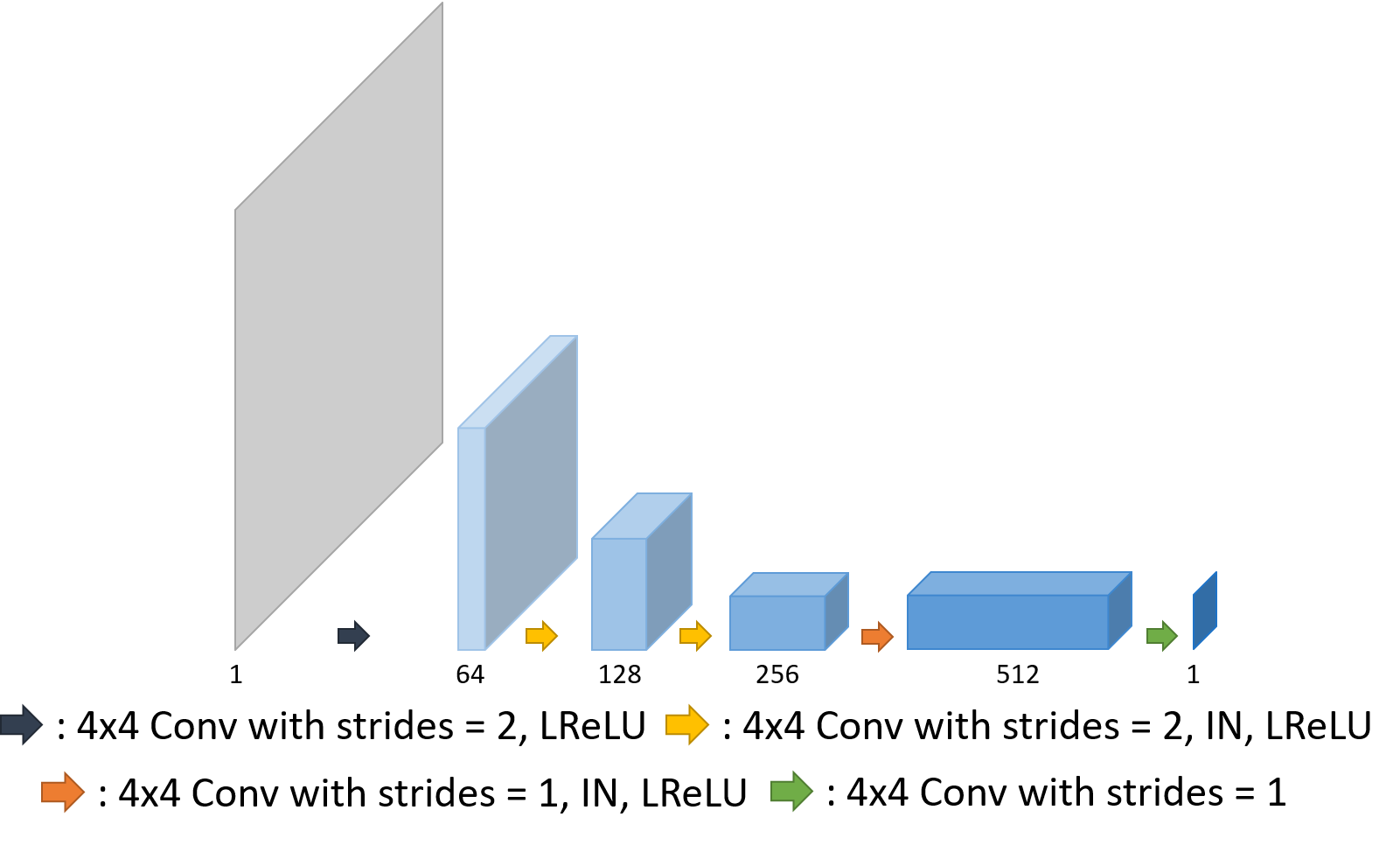}}
		\centerline{(b)}\medskip
	\end{minipage}
	\caption{Network architecture for (a) our generator and (b) our discriminator.}
	\vspace*{-0.5cm}
	\label{fig:G and D}
\end{figure}

\subsection{Quantitative Metrics}
For the quantitative evaluation, we use peak signal-to-noise ratio (PSNR) and structural similarity index metric (SSIM).
PSNR is defined through the mean squared error (MSE):
\begin{eqnarray}
MSE = \frac{1}{N}\| \zb-\hat \zb\|^2 \ ,
\label{eq:psnr}
\end{eqnarray}
where $N$ is the number of pixels, and  $\zb$ is the ground truth SSoS image, and $\hat\zb$ is the reconstruction SSoS image.
Then, PSNR can be defined as follows:
\begin{eqnarray}
PSNR = 20\log_{10}\left(\frac{MAX_\zb}{\sqrt{MSE}}\right) \ ,
\end{eqnarray}
where $MAX_\zb$ is the maximum possible pixel value of the image.
Next, SSIM is defined as follows:
\begin{eqnarray}
SSIM = \frac{(2\mu_\zb\mu_{\hat\zb} + c_1)(2\sigma_{\zb,\hat\zb} + c_2)}
{(\mu_\zb^2 + \mu_{\hat\zb}^2 + c_1)(\sigma_\zb^2 + \sigma_{\hat\zb}^2 + c_2)} \ ,
\end{eqnarray}
where $\mu_\zb$, $\mu_{\hat\zb}$ are the average of each image, $\sigma_\zb$, $\sigma_{\hat\zb}$ are the variance of each image, $\sigma_{\zb,\hat\zb}$ is the covariance of images, $c_1 = (k_1L)^2$, $c_2 = (k_2L)^2$ are variables to stabilize the division with weak denominator, where $L$ is the dynamic range of pixel values, and $k_1 = 0.01$, $k_2 = 0.03$ by default.

\begin{figure*}[!t] 	
	\centerline{\includegraphics[width=0.95\linewidth]{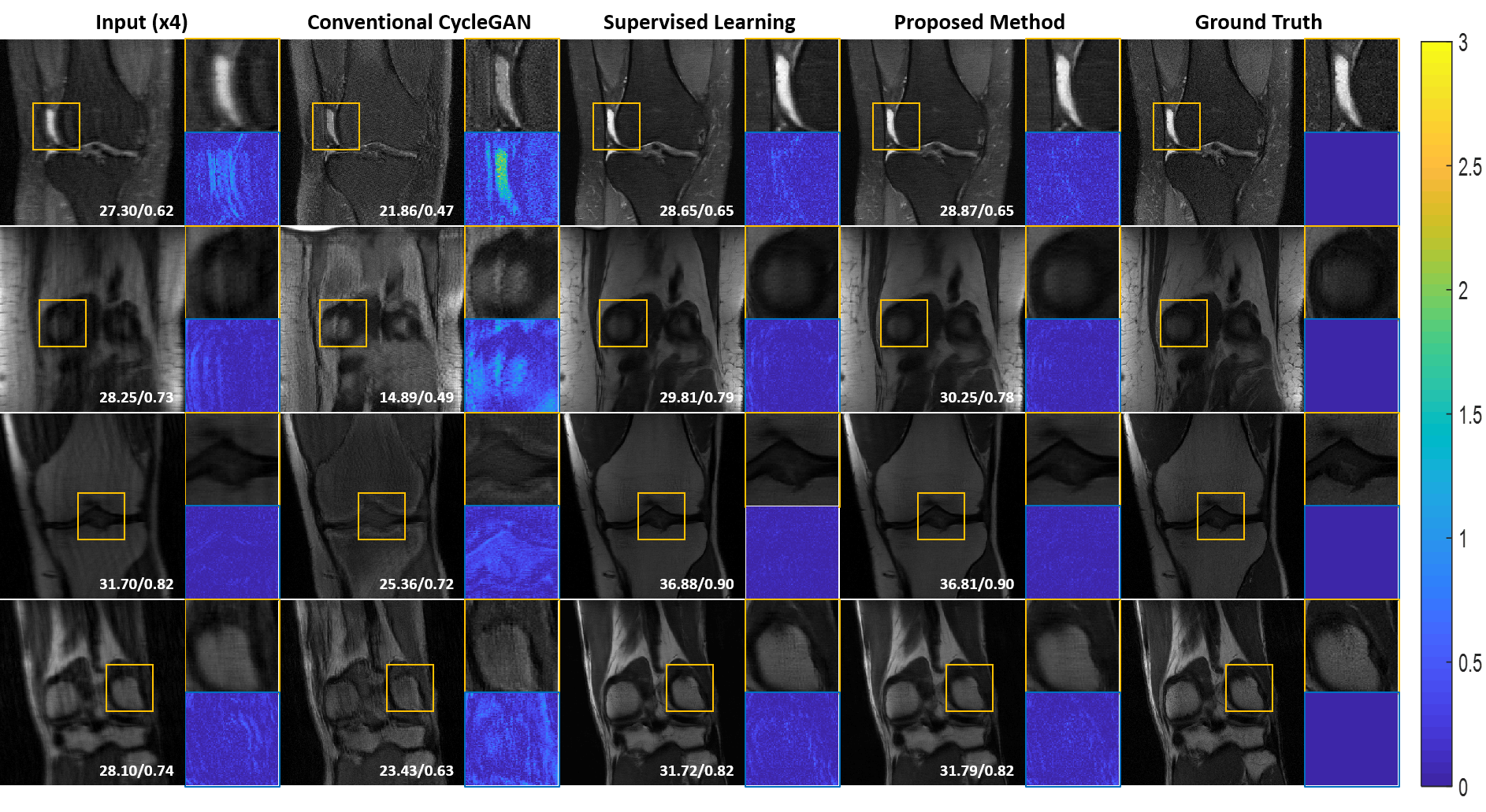}}
	\caption{The reconstruction results of fastMRI single coil ($R = 4$) data using 
		paired/unpaired learning. 
		The values in the corner are PSNR/SSIM values of each slice.
		The difference images are amplified by a factor of five.
		The color bar of the difference images is at the right of the figure.}
	\label{fig:fast_single_4_result}
\end{figure*}

\subsection{Training Details}
Each MR image is normalized by the standard deviation of magnitude values of downsampled images. 
For HCP data experiments, objective functions and quantitative metrics are calculated for the full size of MR images.
On the other hand, only the center $320 \times 320$ is used in the fastMRI evaluation protocol because the fastMRI data were oversampled along the readout direction with signal suppression pulse to prevent overlapping from wrap-around artifact, so most of the values in periphery are zero or do not contain meaningful structures.
Therefore, reconstructed images are cropped at the center by the size of $320 \times 320$, and objective functions are computed for the center cropped images.
After the reconstruction, quantitative metrics are also measured at the center of images.

In our implementation, we minimize (\ref{eq:eta}) with $\eta = 10$.
To optimize our model, we use Adam optimizer with momentum parameters $\beta_1 = 0.5$, $\beta_2 = 0.9$, batch size of 1, and learning rate of 0.0001. 
The discriminator is updated 5 times per each update of the generator. 
Also, our network is trained for 100 epochs. 
Our code is implemented with TensorFlow.

\subsection{Baseline Methods}
We use supervised learning and the standard cycleGAN\cite{zhu2017unpaired} as baseline methods to validate the performance of our method. 
For supervised learning, we use the same network in Fig. \ref{fig:G and D}(a), and $\ell_1$-loss as an objective function. 
We train supervised method for 100 epochs with the learning rate of 0.0001. 
In the standard cycleGAN\cite{zhu2017unpaired}, there are two generators and two discriminators, and the structure of the neural networks are the same as our generator and discriminator. 

\section{Experimental Results}\label{sec:result}
\subsection{Single Coil Experiments}
Fig. \ref{fig:fast_single_4_result} shows the reconstruction results of fastMRI single coil data using various methods when the acceleration factor is 4. 
As shown in Fig. \ref{fig:fast_single_4_result}, there are aliasing artifacts in the input images due to the subsampling of $k$-space. 
When the conventional cycleGAN was used, the reconstructed images are distorted and the aliasing pattern still remains in the reconstructed images. 
Also, the corresponding values of quantitative metrics were even inferior to input images because of the degradation of quality in the reconstructed images.
This implies that the conventional cycleGAN fails in this application.
On the other hand, the supervised learning method successfully reconstructed high quality MR images. 
Interestingly, our proposed method produces compatible reconstruction results to the supervised learning method,
even providing better images in some cases, preserving the 
texture of original images.

Table \ref{tbl:fast metric} shows average quantitative metric values of various reconstruction algorithms. 
We confirm that the proposed method shows competitive quantitative results compared to the supervised method in single coil experiment.

\begin{figure*}[!t] 	
	\centerline{\includegraphics[width=0.95\linewidth]{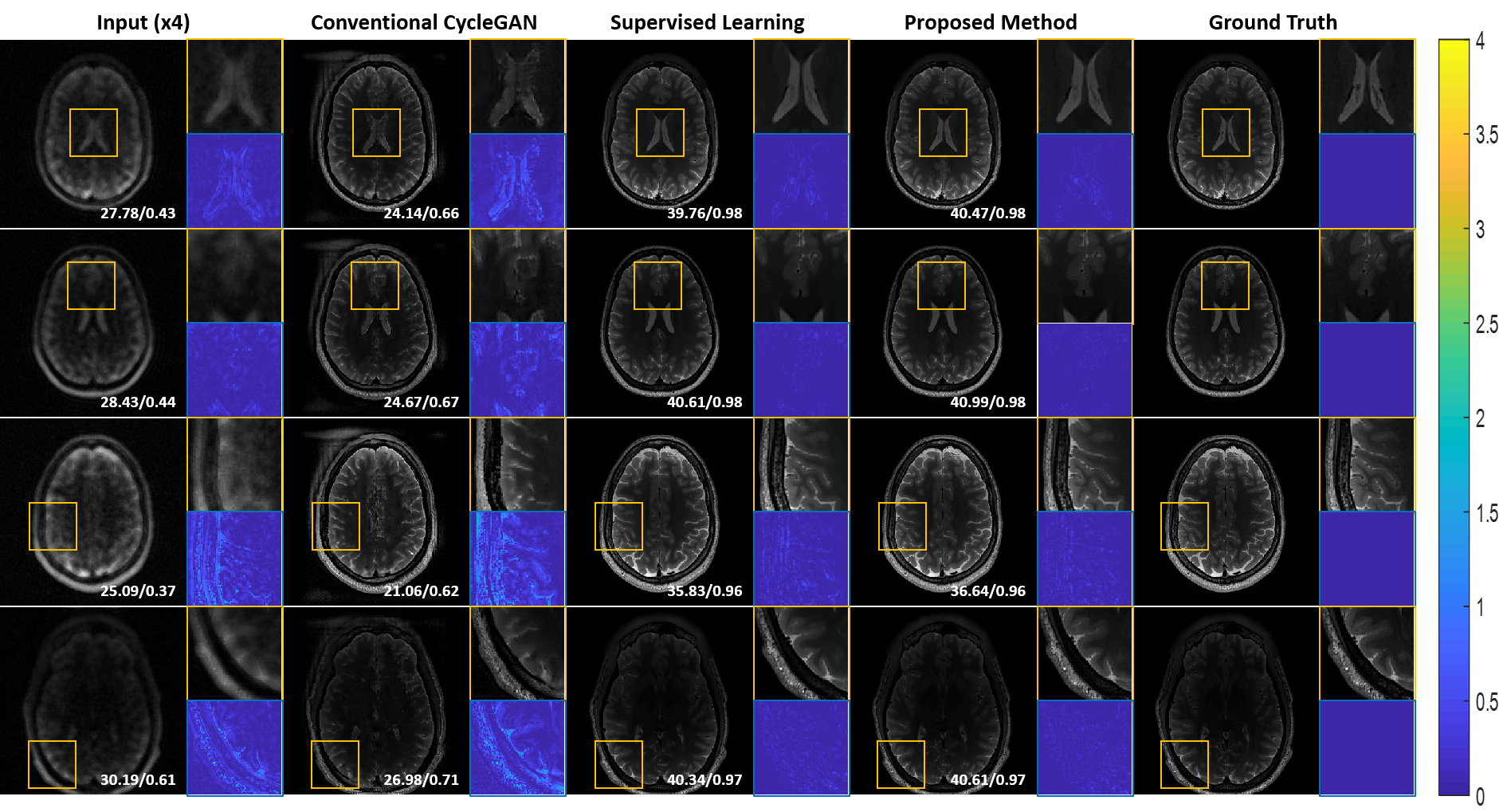}}
	\caption{The reconstruction results of HCP multi-coil ($R = 4$) data using 
		paired/unpaired learning. 
		The values in the corner are PSNR/SSIM values of each slice.
		The difference images are amplified by a factor of five.
		The color bar of the difference images is at the right of the figure.}
	\label{fig:HCP_multi_4_result}
\end{figure*}

\begin{figure*}[!t] 	
	\centerline{\includegraphics[width=0.95\linewidth]{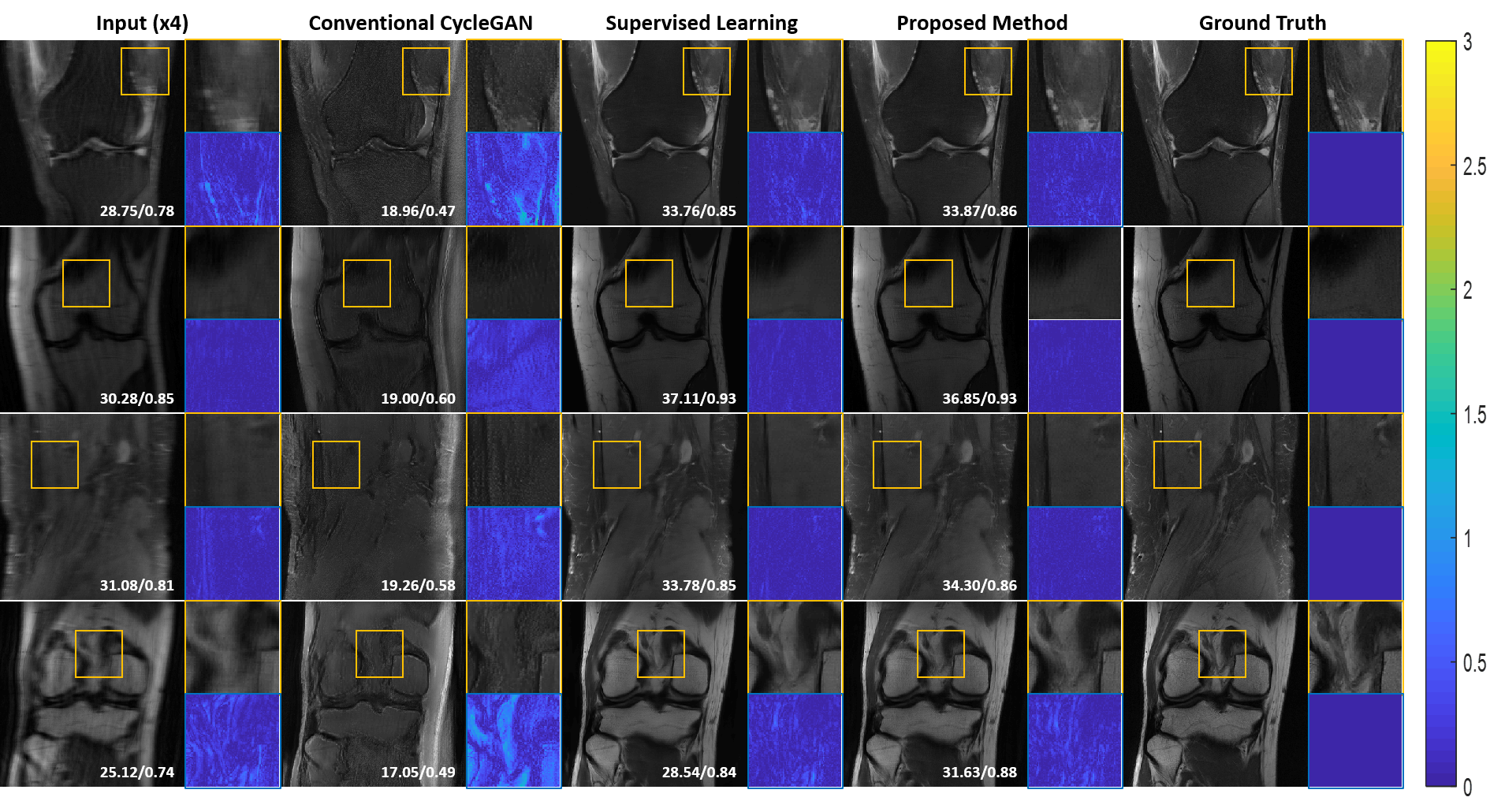}}
	\caption{The reconstruction results of fastMRI multi-coil ($R = 4$) data using 
		paired/unpaired learning. 
		The values in the corner are PSNR/SSIM values of each slice.
		The difference images are amplified by a factor of five.
		The color bar of the difference images is at the right of the figure.}
	\label{fig:fast_multi_4_result}
\end{figure*}

\begin{figure*}[!t] 	
	\centerline{\includegraphics[width=0.95\linewidth]{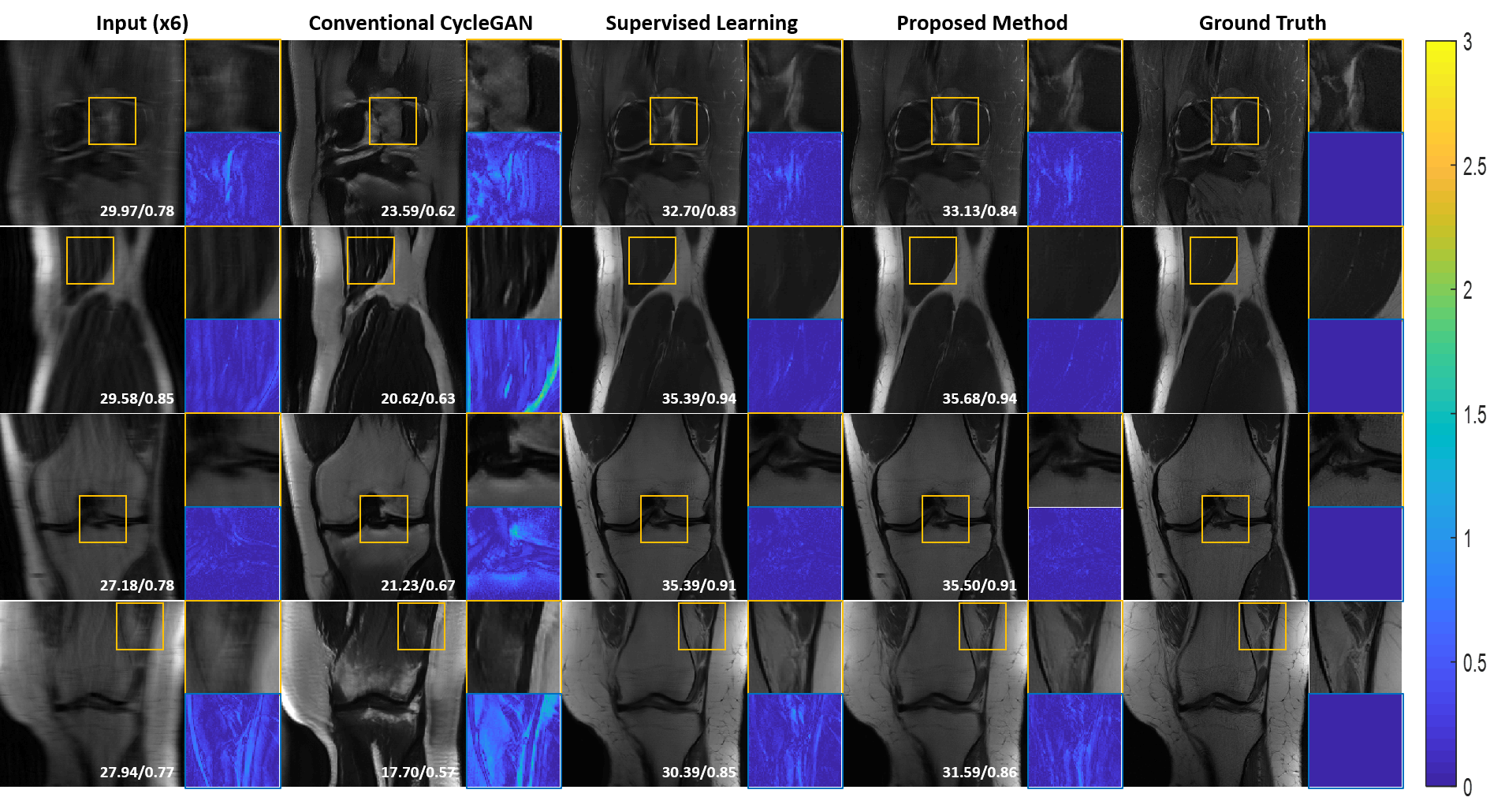}}
	\caption{The reconstruction results of fastMRI multi-coil ($R = 6$) data using 
		paired/unpaired learning. 
		The values in the corner are PSNR/SSIM values of each slice.
		The difference images are amplified by a factor of five.
		The color bar of the difference images is at the right of the figure.}
	\label{fig:fast_multi_6_result}
\end{figure*}

\begin{figure*}[!t] 	
	\centerline{\includegraphics[width=0.9\linewidth]{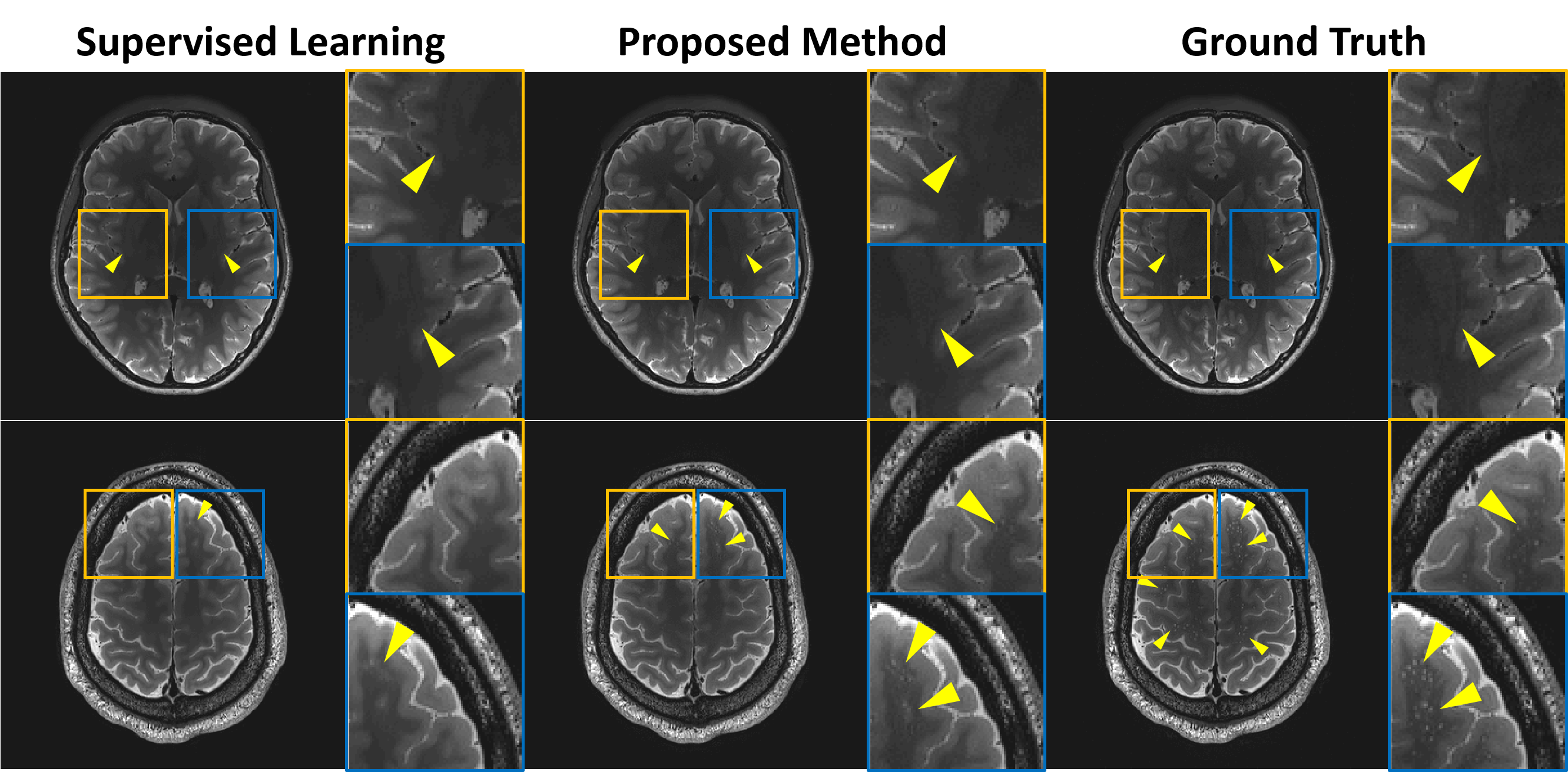}}
	\caption{Radiological evaluation for the reconstruction results of HCP multi-coil ($R = 4$) data using various methods.
	The window level of images is adjusted for better visualization.}
	\label{fig:radiological_evaluation}
\end{figure*}

\begin{table}[!hbt]
	\centering
	\caption{Quantitative comparison for various algorithms on fastMRI data. 
			The values in the table are average values for the whole test data.}
	\label{tbl:fast metric}
	\resizebox{0.4\textwidth}{!}{
		\begin{tabular}{c | c | c | c  c} 
			\toprule
			\multicolumn{3}{c|}{ } 														& PSNR (dB)			& SSIM					\\ \midrule\midrule
			\multirow{4}*{Single Coil}	& \multirow{4}*{$\times 4$}	& Input				& 26.8056			& 0.6279				\\
			\ 							&							& CycleGAN			& 20.8228			& 0.4881				\\
			\ 							&							& Supervised		& \textbf{28.7434}	& 0.6752				\\
			\ 							&							& Proposed			& 28.7426			& \textbf{0.6758}		\\ \midrule
			\multirow{8}*{Multi-Coil}	& \multirow{4}*{$\times 4$}	& Input				& 28.0524			& 0.7616				\\
			\							&							& CycleGAN			& 18.0951			& 0.4616				\\
			\							&							& Supervised		& \textbf{32.5213}	& \textbf{0.8315}		\\
			\							&							& Proposed			& 32.1812			& 0.8300				\\ \cmidrule{2-5}
			\							& \multirow{4}*{$\times 6$}	& Input				& 26.3487			& 0.7120				\\
			\							&							& CycleGAN 			& 20.2832			& 0.5537				\\
			\							&							& Supervised		& \textbf{31.1140}	& 0.8002				\\ 
			\							&							& Proposed			& 31.0229			& \textbf{0.8011}		\\ \bottomrule
		\end{tabular}
	}
\end{table}

\subsection{Multi-Coil Experiments}
Next, we verify the proposed method using multi-coil MR data. 
First, we conduct an experiment for HCP dataset. 
In Fig. \ref{fig:HCP_multi_4_result}, there are multi-coil reconstruction results for HCP data with $\times 4$ acceleration. 
Conventional cycleGAN generates artifacts around brains, and destroys details of the images. 
On the other hand, it is shown that the proposed method successfully reconstructs MR images with fine details. 
Moreover, as shown in Table~\ref{tbl:HCP multi 4 metric}, we verify that the quantitative metric values of the proposed method are competitive and even higher than that of the supervised method.

\begin{table}[!hbt]
	\centering
	\caption{Quantitative comparison for various algorithms on HCP data. 
			The values in the table are average values for the whole test data.}
	\label{tbl:HCP multi 4 metric}
	\resizebox{0.4\textwidth}{!}{
		\begin{tabular}{c | c | c | c  c}
			\toprule
			\multicolumn{3}{c|}{ } 														& PSNR (dB)					& SSIM					\\ \midrule\midrule
			\multirow{4}*{Multi-Coil}	& \multirow{4}*{$\times 4$}	& Input				& 26.6802					& 0.4332				\\
			\ 							&							& CycleGAN			& 23.0759					& 0.6384				\\
			\ 							&							& Supervised		& 37.3204					& 0.9570				\\
			\ 							&							& Proposed			& \textbf{37.7256}			& \textbf{0.9615}		\\ \bottomrule
		\end{tabular}
	}
\end{table}

Fig. \ref{fig:fast_multi_4_result} shows the fastMRI multi-coil reconstruction results from various methods with $R = 4$. 
Conventional cycleGAN still shows poor results in multi-coil data. 
Meanwhile, our proposed method provides accurate reconstruction with consistent texture faithfully emulating the characteristic of fully sampled MR images. 
Furthermore, we confirm that the proposed method produces comparable quantitative results to the supervised method as shown in Table \ref{tbl:fast metric}. 

We also apply our method for higher acceleration factor on multi-coil data. 
In Fig. \ref{fig:fast_multi_6_result}, the reconstruction results for $\times 6$ multi-coil images are shown. 
Our proposed method again successfully removes artifacts and recover fine details of MR images. 
Furthermore, the quantitative result of our method is close to the result of the supervised method  (Table \ref{tbl:fast metric}). 

\begin{figure*}[!t] 	
	\centerline{\includegraphics[width=0.99\linewidth]{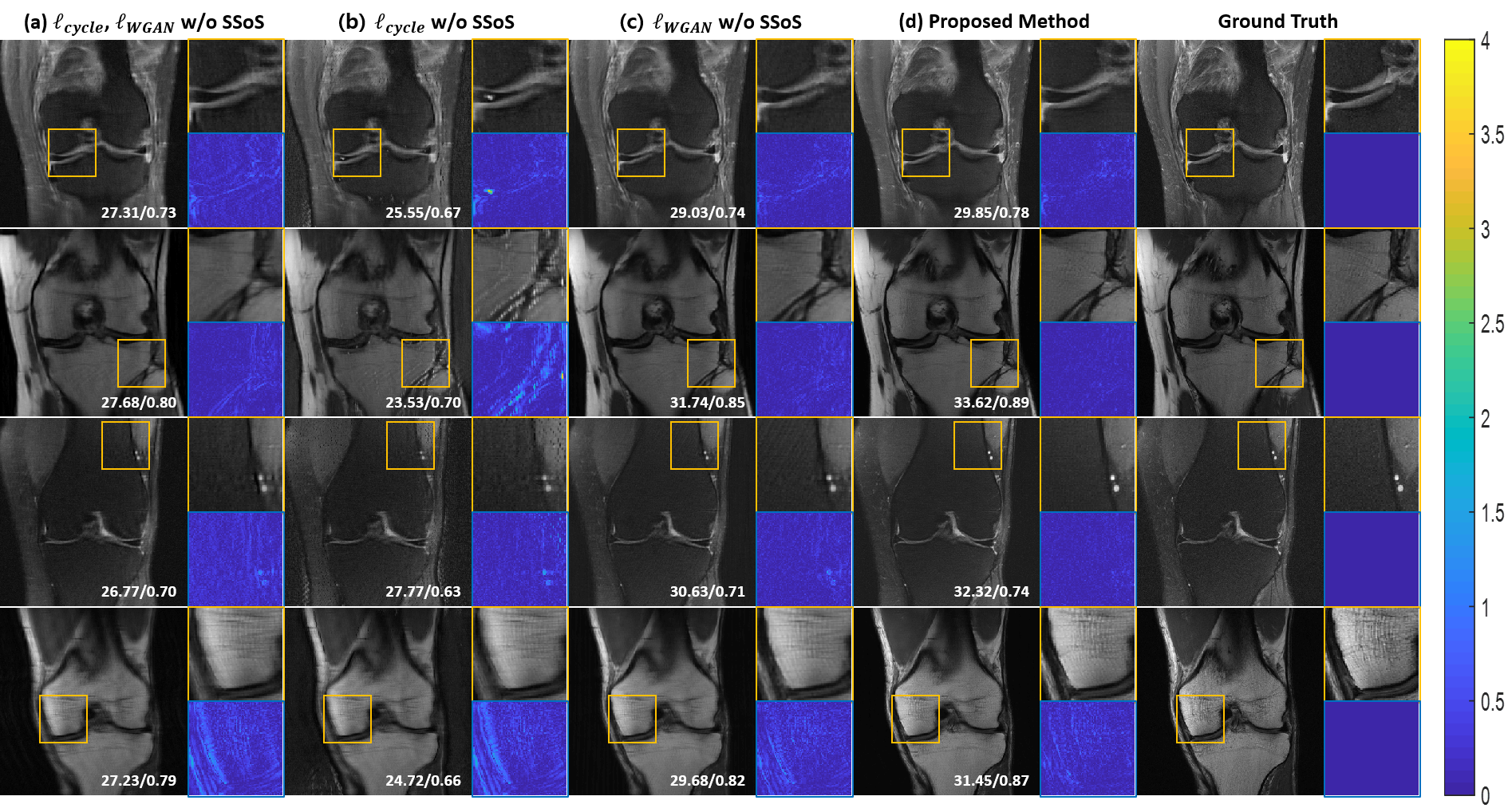}}
	\caption{The reconstruction results of fastMRI multi-coil ($R = 4$) data with or 
			without SSoS operation. 
			The values in the corner are PSNR/SSIM values of each slice.
			The difference images are amplified by a factor of five.
			The color bar of the difference images is at the right of the figure.}
	\label{fig:ablation_ssos}
\end{figure*}

\subsection{Radiological Evaluation}
Additionally, an experienced neuro-radiologist (L.S.) evaluated the reconstruction results of brain multi-coil data. 
Examples of the reconstruction results and ground truth images of HCP data are shown in Fig. \ref{fig:radiological_evaluation}.
As the arrowheads in the ground truth images of the first row indicate, the posterolateral margin of both putamina is clearly delineated, and it shows a certain contrast difference from the normal hypointensity area of the external capsule. 
However, in the supervised method, the posterolateral margin of putamina is not clearly defined. 
On the other hand, the proposed method successfully delineates the contrast difference between the posterolateral margin and the hypointensity area of the external capsule.
Furthermore, in the ground truth images in the second row, multiple T2 hyperintensity foci by perivasular space is observed in the subcortical white matter of bilateral cerebral hemispheres (the arrowheads). 
However, supervised learning shows only small number of perivascular space at the left frontal area. 
Meanwhile, our method reconstructs more perivascular space than supervised method. 
In general, through the radiological evaluation, we verify that there are no significant differences between supervised learning and proposed method in terms of clinical findings, even providing more clinical information in some cases as shown in Fig. \ref{fig:radiological_evaluation}.

\section{Discussion}\label{sec:discussion}
In our experiments, the conventional cycleGAN\cite{zhu2017unpaired} showed inferior qualitative and quantitative results. 
Since the conventional cycleGAN requires two generators and two discriminators, significantly large number of parameters have to be optimized. 
This leads to an unstable training due to the many local minimizers and the network is prone to overfitting error. 
Moreover, in conventional cycleGAN, each generator has to fool its matching discriminator, while reconstructing the input generated from another generator simultaneously.
Accordingly, if one generator or discriminator does not work properly, it can affect the performance of the other networks. 
On the other hand, when one generator is replaced by deterministic operation as in the proposed OT-cycleGAN architecture, the number of unknown parameters decreases, and it can improve the stability of the network training and lead to significant performance gain compared to conventional cycleGAN. 

Another interesting observation is that the reconstruction results of our unpaired learning method are comparable to the supervised method. 
This may be because the proposed cycleGAN learns the transportation mapping by minimizing the transportation cost between all combinations of samples in the two domains, while the supervised method learns the point-to-point mapping. 
Therefore, this unpaired training appears to increase the generalization power of the network, although more theoretical analysis needs to be performed to verify the empirical observations.

Recall that our PLS cost in \eqref{eq:costMR} is calculated in the SSoS image domain rather than complex image domain, even though our generator $G_\Theta$ generates multi-channel complex images.
One could argue that the cost should be also calculated in the complex image domain.
Although this was successful for the single coil cases\cite{sim2019optimal}, we found that such approach failed to provide meaningful results for the multi-coil cases, and we suspect that discriminating real and fake images for each coil image and combining them turns out to be a difficult task.

To verify our claim, we have conducted additional ablation study in  Fig. \ref{fig:ablation_ssos}.
More specifically, Fig. \ref{fig:ablation_ssos}(a) show  the reconstruction results when both of $\ell_{cycle}$ and $\ell_{WGAN}$ are calculated without SSoS.
In these cases, the generator produces blurry outputs with remaining aliasing artifacts.
We conjecture that this is because the generator and the discriminator are trained to learn coil sensitivity maps more than underlying images.
Fig. \ref{fig:ablation_ssos}(b) show that the quality of reconstructed images significantly decreases when SSoS is not used only in $\ell_{cycle}$.
Performance drop is more severe compared to the case in Fig. \ref{fig:ablation_ssos}(a), where SSoS is not used for both  $\ell_{cycle}$ and $\ell_{WGAN}$.
This is because the discriminator disturbs the reconstruction of coil sensitivity maps, since $\ell_{WGAN}$ is calculated with SSoS and the discriminator does not care about sensitivity maps.
On the other hand, $\ell_{cycle}$ computes the error for each coil.
We believe that this inconsistency introduces distortion that leads to the higher reconstruction error.
In Fig. \ref{fig:ablation_ssos}(c), only $\ell_{WGAN}$ is calculated without SSoS, which shows improved results, compared to Fig. \ref{fig:ablation_ssos}(a) and (b), but outputs are still blurred and some diagonal artifacts appear in the reconstructed images.
This is because the generator should not only reconstruct SSoS images to minimize $\ell_{cycle}$, but also consider the coil sensitivity maps for reducing $\ell_{WGAN}$.

Finally, in Fig. \ref{fig:ablation_ssos}(d), by defining the PLS cost using SSoS images, the design of the discriminator and the cycle-consistency term becomes simpler, and we can obtain accurate reconstruction.
As briefly discussed before, incorporating the nonlinearity in the forward problem is not a problem in our cycleGAN formulation, although it causes the difficulty in the standard PLS methods due to the difficulty of calculating the gradient.
This again shows the flexibility of our cycleGAN formulation.

Having said this, the phase information can be very important for some clinical applications.
In these applications, unsupervised learning for high performance reconstruction for individual coil image beyond the SSoS images may be necessary, which is an interesting future research direction.

\section{Conclusion}\label{sec:conclusion}
In this paper, we proposed a novel unpaired deep learning method for accelerated MRI using a modified cycleGAN architecture. 
It was shown that the proposed cycleGAN architecture can be derived from the optimal transport theory with a specially designed penalized least squares cost. 
Also, we verified that our method outperforms the conventional cycleGAN and provides competitive results compared to the supervised method. 
We believe that our method can be a powerful framework for MRI reconstruction by providing means to overcome the difficulty in acquiring matched reference data. 

\section{Acknowledgement}
This work is supported by National Research Foundation (NRF) of Korea, Grant number NRF-2020R1A2B5B03001980.

\bibliographystyle{IEEEtran}
\bibliography{ref,biblio_book}

\end{document}